\def\thmhead@plain#1#2#3{%
  \thmname{#1}\thmnumber{\@ifnotempty{#1}{ }\@upn{#2}}%
  \thmnote{ {\the\thm@notefont#3}}}
\let\thmhead\thmhead@plain
\newcounter{AppCounter}
\def\restrict#1{\raise-.5ex\hbox{\ensuremath|}_{#1}}
\newtheorem{lemma}{Lemma}[section]
\newtheorem{proposition}[lemma]{Proposition}
\newtheorem{remark-definition}[lemma]{Remark-Definition}
\newtheorem{theorem}[lemma]{Theorem}
\newtheorem{corollary}[lemma]{Corollary}
\newtheorem{proposition-conjecture}[lemma]{Proposition-conjecture}
\theoremstyle{definition}
\newtheorem{definition}[lemma]{Definition}
\newtheorem{remark}[lemma]{Remark}
\newcommand{\marginnote}[1]
{%\mbox{}\marginpar{\center{\hspace{0pt}\tiny{\bf#1}}}
}
\newcounter{cy}
\newcounter{bk}
\newcounter{dps}
\title{A basis of Casimirs in 3D magnetohydrodynamics}
\author{ Boris Khesin\thanks{Department of Mathematics, University of Toronto, Toronto, ON M5S 2E4, Canada; e-mail:  \tt{khesin@math.toronto.edu}},
 Daniel Peralta-Salas\thanks{Instituto de Ciencias Matem\'{a}ticas, Consejo Superior de Investigaciones Cient\'{i}ficas,
28049 Madrid, Spain; e-mail:  \tt{dperalta@icmat.es}},
  and Cheng Yang\thanks{Department of Mathematics and Statistics, McMaster University, Hamilton, ON L8S 4K1, Canada, and the Fields Institute, Toronto, ON M5T 3J1, Canada;
 e-mail: \tt{yangc74@math.mcmaster.ca}
  }}
\date{January 2019}
\begin{document}

\maketitle
\begin{abstract}
We prove that any regular Casimir in 3D  magnetohydrodynamics is a function of
the magnetic helicity and cross-helicity. In other words, these two helicities
are the only independent regular integral invariants
of the coadjoint action of the MHD group $\text{SDiff}(M)\ltimes\mathfrak X^*(M)$,
which is the semidirect product of the group of volume-preserving diffeomorphisms and the dual space of its Lie algebra.
\end{abstract}

%\tableofcontents

\section{Introduction} \label{intro}

The  motion  of an inviscid incompressible fluid on a closed Riemannian manifold $M$ is governed by the
classical Euler equation
\begin{equation}\label{idealEuler}
\partial_t v=-(v, \nabla) v-\nabla p\,,
\end{equation}
supplemented by the divergence-free condition ${\rm div}\, v=0$ on the velocity field $v$ of a fluid flow in $M$.
Here the term $(v, \nabla) v$ stands for the Riemannian covariant derivative
$\nabla_v v$ of the field $v$ along itself, and $p$ is the pressure function, which is uniquely defined up to an additive constant. This equation implies that the vorticity field $\omega=\text{curl}\, v$
is frozen into the fluid, a phenomenon that is known as Helmholtz's transport of vorticity. On a three-dimensional manifold $M$ endowed with the Riemannian volume form $d\mu$ this,
in turn, implies the conservation of helicity, a quantity that was discovered by Moreau~\cite{Mo61} and Moffatt~\cite{Mo69} in the 1960's:
$$
H(\omega, \omega):=\int_M \omega\cdot \text{curl}^{-1} \omega\,d\mu=\int_M \omega\cdot v\,d\mu\,.
$$

In~\cite{EnPeTo} it was proved that the helicity is the only $C^1$-Casimir of the 3D Euler equation; more precisely, any $C^1$-regular
functional of the vorticity that is invariant under the coadjoint action of the corresponding group of volume-preserving diffeomorphisms of $M$ must be a function of helicity. An analogous result in the context of three-manifolds with boundary, and divergence-free vector fields admitting a global cross section, was proved in~\cite{Ku14,Ku16}.

In this paper we describe a complete list of functionally independent Casimirs in self-consistent magnetohydrodynamics.
In MHD on a closed three-dimensional Riemannian manifold $M$
one considers an ideal incompressible fluid of infinite conductivity which carries a
magnetic field $B$. The field $B$ is transported by the fluid flow, i.e. it is frozen in it, and in turn reciprocally
acts (via the Lorenz force) on the conducting fluid. The corresponding equations of self-consistent magnetohydrodynamics
 described in Section~\ref{subsec:Group-theoretic} have two well-known first integrals discovered by Woltjer~\cite{Wo58}: the magnetic helicity of the field $B$, which is analogous to the hydrodynamic helicity defined above, and the cross-helicity, which is a measure of entanglement
of the fields $B$ and $\omega=\text{curl}\, v$. Our main theorem states that these two invariants are the only functionally independent MHD Casimirs,
i.e. invariants of the corresponding coadjoint action. More precisely, we show that any $C^1$-functional
that is invariant under the coadjoint action of the MHD group must be a function of magnetic helicity and cross-helicity. This extends (and actually recovers it as a particular case) the uniqueness of hydrodynamics helicity proved in~\cite{EnPeTo} to the context of MHD. The problem of finding a basis of Casimirs for the coadjoint orbits of the diffeomorphism group (or, more generally, the MHD group) is natural and was explicitly stated in~\cite[Section I.9]{arkh}.

The paper is organized as follows. In Section~\ref{sec:settings} we present a few facts about the MHD equations, including the invariance of cross-helicity and magnetic helicity under the coadjoint action of the MHD group (Sections~\ref{subsec:Group-theoretic} and~\ref{subsect:inv_helicity}) and we state the main theorem of this paper (Section~\ref{sec:main_result}). We divide the proof of the main result in four steps, which are presented in Section~\ref{sec:proof}. Finally, in Section~\ref{sec:appendix} we recall the Hamiltonian formulations of the Euler and MHD equations, as well as the explicit form of the corresponding coadjoint action in terms of the vorticity and magnetic fields.
\bigskip

\noindent {\bf Acknowledgments.}
B.K. was partially supported by an NSERC research grant.
D.P.-S. was supported by the ERC Starting Grant~335079, and partially supported by the ICMAT--Severo Ochoa grant SEV-2015-0554. A part of this work was done
while C.Y. was visiting  the Instituto de Ciencias Matem\'{a}ticas (ICMAT) in Spain. C.Y. is grateful to the ICMAT for its support and kind hospitality.

%\medskip

\section{Geometric Settings}\label{sec:settings}

\subsection{Equations of self-consistent magnetohydrodynamics}\label{subsec:Group-theoretic}

The evolution of an infinitely conducting ideal fluid carrying a magnetic field on
a closed three-dimensional Riemannian manifold $M$ is described by the following system of magnetohydrodynamics equations on the fluid velocity $v$ and the magnetic field $B$:
\begin{equation}\label{eq:mag_eq}
\left\{
  \begin{array}{l}

         \partial_t v = -(v,\nabla)v+(\text{curl}\; B)\times B-\nabla p\,,\\\\
	\partial_t B = -[v,B]\,,\\\\
		            \text{div}\; B=  \text{div}\; v=0\,.

\end{array} \right.
\end{equation}
Here $[v,B]$ stands for the Lie bracket of two vector fields, $v$ and $B$, and $\times$ denotes the cross product on the $3$-manifold.

Taking the curl on the first equation of~\eqref{eq:mag_eq}, we can rewrite the MHD equations as the evolution of the pair of fields $(\omega, B)$, where the field $\omega:={\rm curl}\,v$ is the vorticity field:
\begin{equation}\label{eq:magg}
         \partial_t \omega = [\omega,v]-[{\rm curl}\,B,B] \qquad \text{ and } \qquad \partial_t B = [B,v]\,.\\\\
\end{equation}

Consider the subspace $\mathfrak X(M)$ of exact divergence-free fields on $M$.
Recall that a divergence-free field $w$ is exact  if $w$ admits a field-potential, or, equivalently, if $i_wd\mu$ is an exact $2$-form. For example, on a closed three-dimensional manifold $M$ with trivial first  cohomology group, $H^1(M)=0$, all divergence-free fields are exact.
For an exact velocity field $v$, its evolution can be recovered from the vorticity evolution with the help of the ${\rm curl}^{-1}$-operator, since the curl operator on the space  of exact divergence-free fields on $M$ is one-to-one.

Furthermore, the $\rm curl$ operator on a Riemannian
manifold $M$ allows one to identify the space of exact divergence-free vector fields $\mathfrak X(M)$ and its dual $\mathfrak X^*(M)$,
as explained in Section~\ref{sec:appendix}.
It turns out that the MHD equations are Hamiltonian on the space of pairs
$(\omega, B)\in \mathfrak X(M)\times \mathfrak X(M)$.

\subsection{Invariance of the cross-helicity and magnetic helicity }\label{subsect:inv_helicity}

Consider the space of pairs $(\omega, B)\in \mathfrak X(M)\times \mathfrak X(M)$ of vorticity and magnetic fields on $M$.
\begin{definition}
The {\it magnetic helicity } is the following quadratic form on $B$:
$$
H(B,B):=\int_M B\cdot {\rm curl}^{-1} B\,d\mu\,.
$$
The  {\it cross-helicity } is the following bilinear form on $(\omega, B)$:
$$
H(\omega,B):=\int_M B\cdot {\rm curl}^{-1} \omega\,d\mu=\int_M B\cdot v\,d\mu\,,
$$
where $\omega ={\rm curl}\,v$ on $M$ (in other words, $v$ is the only field in $\mathfrak X(M)$ such that ${\rm curl}\,v=\omega$).
\end{definition}

These quantities turn out to be  invariant under the evolution of the MHD equations. Moreover,
they are invariant under the action of a group that generalizes the group $\text{SDiff}(M)$ of volume-preserving diffeomorphisms of the manifold $M$, similarly to the case of the hydrodynamics helicity. These properties are summarized in the following proposition.

\begin{proposition}\label{prop:invariance}
Both the magnetic helicity $H(B,B)$ and the cross-helicity $H(\omega,B)$
are first integrals of the MHD equations. Furthermore, they are Casimirs of the MHD equations, i.e. they are invariants
of the coadjoint action of the semidirect-product group $G={\rm SDiff}(M)\ltimes \mathfrak X^*(M)$ on its dual space
$\mathfrak g^*$.
\end{proposition}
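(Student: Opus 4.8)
The plan is to prove Proposition 2.3, which has two parts: that the magnetic helicity and cross-helicity are first integrals of the MHD equations, and that they are Casimirs (invariants of the coadjoint action of the MHD group).

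The plan is to establish the stronger Casimir (coadjoint-invariance) statement and to deduce the first-integral property from it, since the MHD flow is Hamiltonian on $\mathfrak{g}^*$ (Section~\ref{sec:appendix}) and hence stays within the coadjoint orbits on which Casimirs are constant; I will also indicate the direct dynamical verification. The first move is to rewrite both quantities intrinsically. Setting $\mathcal{B}=i_B\,d\mu$ (a closed, exact $2$-form) with primitive $\mathcal{A}$, $d\mathcal{A}=\mathcal{B}$, and $u=v^\flat$ with $v=\mathrm{curl}^{-1}\omega$, one has $H(B,B)=\int_M\mathcal{A}\wedge d\mathcal{A}$ and, using the pointwise identity $u\wedge i_B\,d\mu=(v\cdot B)\,d\mu$, also $H(\omega,B)=\int_M u\wedge\mathcal{B}$. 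In these terms the $\mathrm{SDiff}(M)$-factor of $G$ acts by pushforward, $(\omega,B)\mapsto(\phi_*\omega,\phi_*B)$, so each integrand is merely pulled back by a volume- and orientation-preserving diffeomorphism; both helicities are thus manifestly invariant under this factor. Infinitesimally this is the vanishing of $\int_M\mathcal{L}_X(\mathcal{A}\wedge d\mathcal{A})$ and of $\int_M\mathcal{L}_X(u\wedge\mathcal{B})$, each being $\int_M d\,i_X(\cdot)$ of a top form, hence zero by Stokes on the closed manifold $M$.

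The substantive step is invariance under the abelian factor $\mathfrak{X}^*(M)$ of the semidirect product. Since $G$ is connected I would reduce to the infinitesimal level and verify $\langle dH,\ \mathrm{ad}^*_{(X,a)}(\omega,B)\rangle=0$ for every $(X,a)\in\mathfrak{g}$, using the explicit coadjoint action recalled in Section~\ref{sec:appendix}. The key structural feature of that action is that the advected field $B$ transforms only under the $\mathrm{SDiff}$ factor and is fixed by the translations $a$, whereas $\omega$ acquires, besides its transport term, a cocycle term coupling $a$ and $B$. Consequently $H(B,B)$, depending on $B$ alone, is automatically invariant under the translation part and needs nothing further. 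For the cross-helicity, the transport terms reassemble into the $\mathrm{SDiff}$-invariance already treated, so the only remaining contribution is the pairing of the cocycle term against the potential $\mathcal{A}=\mathrm{curl}^{-1}B$.

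I expect that pairing to be the crux, and to vanish by the same mechanism that annihilates the Lorentz contribution in the dynamical computation, namely the orthogonality $(\mathrm{curl}\,B\times B)\cdot B=0$. Making this precise—tracking the $\mathrm{curl}$-identification $\mathfrak{X}(M)\cong\mathfrak{X}^*(M)$ and the exact form of the semidirect-product cocycle so as to expose this orthogonality—is the main obstacle; once exhibited, infinitesimal invariance holds and integrates to full coadjoint invariance by connectedness of $G$.

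Finally, the first-integral claim then follows immediately, as a Casimir is constant on coadjoint orbits and the Hamiltonian MHD evolution remains within a single orbit. Alternatively, one may check it directly: differentiating $H(B,B)=\int_M\mathcal{A}\wedge d\mathcal{A}$ along $\partial_t\mathcal{B}=-\mathcal{L}_v\mathcal{B}$ gives $-\int_M\mathcal{L}_v(\mathcal{A}\wedge\mathcal{B})=0$; and differentiating $H(\omega,B)=\int_M u\wedge\mathcal{B}$, with $\partial_t u=-\mathcal{L}_v u+dh+i_B\,dB^\flat$ and $\partial_t\mathcal{B}=-\mathcal{L}_v\mathcal{B}$, the transport terms combine to $-\int_M\mathcal{L}_v(u\wedge\mathcal{B})=0$, the exact term equals $\int_M d(h\mathcal{B})=0$ by closedness of $\mathcal{B}$, and the Lorentz term is $\int_M(i_B\,dB^\flat)\wedge i_B\,d\mu=\int_M\big((\mathrm{curl}\,B\times B)\cdot B\big)\,d\mu=0$ by the same orthogonality—confirming that one and the same identity governs both the dynamical and the coadjoint computations.
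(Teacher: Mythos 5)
Your proposal is correct and is essentially the direct computation that the paper's one-line proof gestures at: it verifies invariance from the explicit coadjoint action of Section~\ref{sec:appendix}, with the only nontrivial cancellation being the vanishing of the triple product with a repeated factor $B$ (equivalently $(i_B d\alpha)\wedge i_B\,d\mu=0$ pointwise), which you correctly identify as the crux. Your reformulation in differential-form language and the remark that the first-integral property follows from the Casimir property via the Hamiltonian structure are harmless elaborations, not a different route.
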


\begin{proof}
It is a direct computation using the MHD equations and the explicit form of the coadjoint operator $\widetilde{\rm ad}^*$, see Section~\ref{sec:appendix} for details.
\end{proof}

\subsection{The main theorem}\label{sec:main_result}
As explained in Section~\ref{S:appmhd} (Remark~\ref{rem:action}), the coadjoint action $\widetilde{\rm ad}^*_{(v,A)}$ on $(\omega,B)\in \mathfrak X(M)\times \mathfrak X(M)$, expressed in terms of vector fields, is
\begin{equation}\label{eq:coadjoint}
 \widetilde{\rm ad}^*_{(v,A)}(\omega,B)=([\omega,v]-[A,B],[B,v])\,,
\end{equation}
where $(v,A)$ is any pair of divergence-free vector fields. Notice that the resulting fields $[\omega,v]-[A,B]$ and $[B,v]$ are exact. It is interesting to compare this expression with the MHD equations~\eqref{eq:magg}.

Our goal is to prove that, under appropriate regularity hypotheses, the magnetic helicity and cross-helicity form a basis of Casimirs of the aforementioned coadjoint action. To this end, we introduce the following definition, where we use $\mathfrak X^1(M)$ to denote the space of $C^1$ exact divergence-free fields on $M$. The $\widetilde{\rm ad}^*$ action on the space of smooth exact fields $\mathfrak X(M)\times\mathfrak X(M)$ naturally extends to $\mathfrak X^1(M)\times\mathfrak X^1(M)$.

\begin{definition}\label{def:regInt}
 Let $ F : \mathfrak X^1(M)\times\mathfrak X^1(M)\rightarrow\mathbb R$ be a $C^1$ functional.
  We say that $F$ is
a regular integral invariant if:

\noindent (i) It is invariant under the coadjoint action of the Lie group $ G=\text{SDiff}(M)\ltimes\mathfrak X^*(M)$, i.e., $F(\omega,B)= F(\widetilde{\rm Ad}^*_{\Phi}(\omega,B))$ for any $\Phi\in G$, and the group action $\widetilde{\text{Ad}}^*$ on $(\omega,B)$ is induced from the action $\widetilde{\rm ad}^*$ introduced in \eqref{eq:coadjoint}.

\noindent (ii) At any point $(\omega,B)\in\mathfrak X^1(M)\times\mathfrak X^1(M)$, the (Fr\'{e}chet) derivative of $F$ is an integral operator with
continuous kernel, that is,
\begin{equation}\label{eq:regInt}
( DF)_{(\omega,B)}(u,b)=\int_M K(\omega,B)\cdot(u,b)\,d\mu=\int_M (K_1(\omega,B)\cdot u+K_2(\omega,B)\cdot b)d\mu\,,
\end{equation}
for any $(u,b)\in\mathfrak X^1(M)\times\mathfrak X^1(M)$, where $ K=(K_1,K_2): \mathfrak X^1(M)\times\mathfrak X^1(M)\rightarrow\mathfrak X^1(M)\times\mathfrak X^1(M)$ is a continuous map. In this expression, the dot product denotes the scalar product of two vector fields using the Riemannian metric on $M$.
\end{definition}

It is easy to check that both the magnetic helicity and the cross-helicity are regular integral invariants in the sense of this definition. The following is the main result of this paper.

\begin{theorem}\label{thm:main}
Let $F$ be a regular integral invariant. Then $F$ is a function
of the magnetic helicity and cross-helicity,  i.e., there exists a $C^1$ function $f : \mathbb R\times\mathbb R \rightarrow \mathbb R$  such that $F(\omega,B) =f(H(B,B), H(\omega,B))$, where $(\omega,B)\in\mathfrak X^1(M)\times\mathfrak X^1(M)$.
\end{theorem}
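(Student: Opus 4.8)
The plan is to convert the coadjoint invariance of $F$ into a pair of pointwise differential relations for the kernel $K=(K_1,K_2)$, and then to solve these relations using the rigidity of steady Euler-type equations, exactly as in the hydrodynamic argument of~\cite{EnPeTo}, but now for the two coupled components. First I would differentiate the invariance $F(\omega,B)=F(\widetilde{\rm Ad}^*_\Phi(\omega,B))$ at $\Phi={\rm id}$ along an arbitrary pair of divergence-free fields $(v,A)$. Using~\eqref{eq:coadjoint} and the integral representation~\eqref{eq:regInt}, this yields
\[
\int_M \big(K_1\cdot([\omega,v]-[A,B])+K_2\cdot[B,v]\big)\,d\mu=0
\]
for all divergence-free $(v,A)$. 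Setting $v=0$ and then $A=0$ splits this into the two independent conditions $\int_M K_1\cdot[A,B]\,d\mu=0$ for all $A$, and $\int_M \big(K_1\cdot[\omega,v]+K_2\cdot[B,v]\big)\,d\mu=0$ for all $v$.

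The next step is to turn these $L^2$-orthogonality statements into differential equations. Integrating by parts (writing $i_{[X,Y]}d\mu=d(i_X i_Y d\mu)$ for divergence-free $X,Y$ and $d(Z^\flat)=i_{{\rm curl}\,Z}d\mu$) gives the identity $\int_M K\cdot[X,Y]\,d\mu=\int_M \langle({\rm curl}\,K)\times X,\,Y\rangle\,d\mu$. Applying it to both conditions and invoking the Hodge decomposition — the $L^2$-orthogonal complement of the divergence-free fields is the gradients — I obtain, for some functions $h_1,h_2$ on $M$,
\[
({\rm curl}\,K_1)\times B=\nabla h_1,\qquad ({\rm curl}\,K_1)\times\omega+({\rm curl}\,K_2)\times B=\nabla h_2.
\]

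The third step solves this system. The first equation is a magnetohydrostatic (steady Euler) equation with ``velocity'' ${\rm curl}\,K_1$ and ``vorticity'' $B$: here $h_1$ is a first integral of both $B$ and ${\rm curl}\,K_1$, so on the open dense set of generic fields $(\omega,B)$ — where Arnold's structure theorem applies to $B$, the regular level sets of $h_1$ being invariant tori carrying ergodic flows, interspersed with chaotic regions — the rigidity analysis of~\cite{EnPeTo} forces ${\rm curl}\,K_1=c_2\,B$ with $c_2$ a genuine constant on $M$. Since both $K_1$ and $c_2\,{\rm curl}^{-1}B$ are exact divergence-free and ${\rm curl}$ is injective on such fields, this integrates to $K_1=c_2\,{\rm curl}^{-1}B$. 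Substituting ${\rm curl}\,K_1=c_2 B$ into the second equation and using $B\times\omega=-\omega\times B$ decouples it to $({\rm curl}\,K_2-c_2\,\omega)\times B=\nabla h_2$, which has the very same structure; the same argument yields ${\rm curl}\,K_2-c_2\,\omega=2c_1\,B$ for a constant $c_1$, hence $K_2=2c_1\,{\rm curl}^{-1}B+c_2\,{\rm curl}^{-1}\omega$. By the continuity hypothesis~(ii) on $K$, these identities, established on the dense set of generic fields, extend to all of $\mathfrak X^1(M)\times\mathfrak X^1(M)$.

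Finally, comparing with the directly computed derivative kernels $D H(B,B)=(0,\,2\,{\rm curl}^{-1}B)$ and $D H(\omega,B)=({\rm curl}^{-1}B,\,{\rm curl}^{-1}\omega)$ shows that $(DF)_{(\omega,B)}=c_1\,D H(B,B)+c_2\,D H(\omega,B)$. Thus $DF$ annihilates every tangent direction in the common kernel of $DH(B,B)$ and $DH(\omega,B)$, so $F$ is constant along the joint level sets of the two helicities; integrating along paths inside these connected level families then produces a $C^1$ function $f$ with $F=f(H(B,B),H(\omega,B))$ and $\partial_1 f=c_1$, $\partial_2 f=c_2$. The main obstacle is the rigidity step: proving that the a priori only torus-wise-constant multipliers are in fact global constants on $M$, and the same constant throughout a helicity level set. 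This is precisely where the genericity of $(\omega,B)$ and the interplay between the integrable tori of $B$ and its ergodic/chaotic regions must be exploited, as in~\cite{EnPeTo}; the new feature here is that one must run this analysis twice, and in the second pass with the shifted field ${\rm curl}\,K_2-c_2\,\omega$.
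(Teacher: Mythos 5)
Your overall route coincides with the paper's: differentiate the invariance along the coadjoint action, integrate by parts and use the Hodge decomposition to obtain the two pointwise equations $({\rm curl}\,K_1)\times B=\nabla h_1$ and $({\rm curl}\,K_1)\times\omega+({\rm curl}\,K_2)\times B=\nabla h_2$, solve the first for ${\rm curl}\,K_1=c_2 B$, substitute to decouple the second, and match $DF$ with $c_2\,DH(\omega,B)+c_1\,DH(B,B)$. That part is correct and is exactly Steps~1 and~3 of the paper. However, there are two genuine gaps.

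First, the rigidity step, which you yourself flag as ``the main obstacle,'' is not supplied, and the mechanism you gesture at is the wrong one for this setting. You invoke Arnold's structure theorem and a picture of invariant tori of $B$ on which the multiplier is a priori only torus-wise constant. Arnold's theorem requires high regularity (essentially analyticity) of a genuine steady solution, and the invariant-tori picture is precisely the obstruction one must \emph{rule out}, not exploit: the paper explicitly notes that the statement fails for $C^4$ fields in the $C^4$ topology because KAM tori persist and carry nonconstant first integrals. The tool that actually works (Lemma~\ref{lem:const_firsInteg}) is Bessa's theorem: in the $C^1$ topology a residual set of exact divergence-free fields is topologically transitive with finitely many hyperbolic zeros, hence admits \emph{no} nonconstant continuous first integral. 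This forces $h_1$ to be constant, whence ${\rm curl}\,K_1=j\,B$ pointwise with $j=B\cdot {\rm curl}\,K_1/|B|^2$, and a flow-box argument shows $j$ is itself a continuous first integral of $B$, hence constant; continuity of the kernel $K$ then propagates the conclusion from the residual set to all $(\omega,B)$ with $B\not\equiv 0$. Without some argument of this type your ``torus-wise constant'' multipliers simply do not become global constants.

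Second, you assert that $F$ can be integrated ``along paths inside these connected level families,'' but the path-connectedness of the joint level sets of $(H(B,B),H(\omega,B))$ in $\mathfrak X^1(M)\times\mathfrak X^1(M)$ is not obvious and is not proved in your proposal; without it, $F$ could take different values on different components of a level set and would not descend to a function $f$ of the two helicities. The paper devotes its Step~4 (Proposition~\ref{lem:pathconn}) to this, using the fact that the curl operator on exact fields has infinitely many positive and negative eigenvalues, so the helicity quadratic form has infinite-dimensional positive and negative subspaces; one then builds explicit paths of constant mixed helicity through an auxiliary field $\beta$ that is helicity-orthogonal to $\omega_0,\omega_1,B_0,B_1$. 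You also gloss over the degenerate cases ($B\equiv 0$, vanishing helicity values, and the zero set of $B$), which the paper handles by separate limiting arguments.
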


\begin{corollary}\label{cor}
Let $F$ be a regular integral invariant that  depends  only on the magnetic field $B$.  Then $F$ is a function
of the magnetic helicity, i.e., there exists a $C^1$ function $g : \mathbb R \rightarrow \mathbb R$  such that $F(B) =g(H(B,B))$, where $B\in\mathfrak X^1(M)$.
\end{corollary}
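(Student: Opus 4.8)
The plan is to deduce the corollary directly from Theorem~\ref{thm:main}, using the elementary fact that once the magnetic field is fixed and nonzero, the cross-helicity can be made to sweep out all real values. A functional $F$ that depends only on $B$ is in particular a regular integral invariant on $\mathfrak X^1(M)\times\mathfrak X^1(M)$ in the sense of Definition~\ref{def:regInt} (it is simply independent of its first argument $\omega$), so Theorem~\ref{thm:main} applies and yields a $C^1$ function $f:\mathbb R\times\mathbb R\to\mathbb R$ with $F(\omega,B)=f(H(B,B),H(\omega,B))$ for all admissible pairs $(\omega,B)$. It then suffices to show that $f$ is independent of its second slot on the range of the magnetic helicity.

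To that end I would fix any $B\in\mathfrak X^1(M)$ with $B\neq 0$, so that $h:=H(B,B)$ is a fixed number, and probe the functional along a one-parameter family of vorticities that leaves $B$ untouched. For $t\in\mathbb R$ set $\omega_t:={\rm curl}(tB)=t\,{\rm curl}\,B$; since $B$ is exact divergence-free and the ${\rm curl}$ operator maps $\mathfrak X^1(M)$ into itself, we have $\omega_t\in\mathfrak X^1(M)$, and by construction ${\rm curl}^{-1}\omega_t=tB$. Hence
\[
H(\omega_t,B)=\int_M B\cdot {\rm curl}^{-1}\omega_t\,d\mu=t\int_M|B|^2\,d\mu=t\,\|B\|_{L^2}^2\,,
\]
which, because $\|B\|_{L^2}^2>0$ for $B\neq 0$, takes every real value as $t$ ranges over $\mathbb R$. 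On the other hand $F(\omega_t,B)=F(B)$ is independent of $t$ by hypothesis. Comparing the two expressions gives $f(h,s)=F(B)$ for every $s\in\mathbb R$, so $f(h,\cdot)$ is constant for each value $h$ attained by the magnetic helicity.

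Finally I would define $g:\mathbb R\to\mathbb R$ by $g(h):=f(h,0)$, which is $C^1$ because $f$ is. For $B\neq 0$ the previous step gives $F(B)=f(H(B,B),0)=g(H(B,B))$, while for $B=0$ both sides reduce to $f(0,0)=g(0)$, so the identity $F(B)=g(H(B,B))$ holds on all of $\mathfrak X^1(M)$. There is no substantial obstacle here beyond Theorem~\ref{thm:main} itself; the only point needing a moment's care is the surjectivity of $t\mapsto H(\omega_t,B)$ for each fixed nonzero $B$, which the linear rescaling $\omega_t={\rm curl}(tB)$ supplies, together with the verification that each $\omega_t$ remains in the admissible space $\mathfrak X^1(M)$.
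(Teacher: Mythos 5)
Your overall route is the right one, and indeed the one the paper intends: the corollary is stated without a separate proof, as an immediate consequence of Theorem~\ref{thm:main}, and your idea of sweeping the cross-helicity $H(\omega,B)$ over all of $\mathbb R$ while keeping $B$ fixed is exactly what is needed to show that $f$ does not depend on its second slot. However, one step fails as written: the claim that ``the ${\rm curl}$ operator maps $\mathfrak X^1(M)$ into itself.'' The curl loses one derivative, so for $B$ merely of class $C^1$ the field $\omega_t=t\,{\rm curl}\,B$ is in general only continuous, hence does not belong to $\mathfrak X^1(M)$; but the identity $F(\omega,B)=f(H(B,B),H(\omega,B))$ furnished by Theorem~\ref{thm:main} is available only for pairs in $\mathfrak X^1(M)\times\mathfrak X^1(M)$, so your test family is not admissible.

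The repair is immediate. Instead of ${\rm curl}(tB)$, choose any \emph{smooth} exact divergence-free field $w$ with $\int_M B\cdot w\,d\mu\neq 0$; such a $w$ exists because the smooth exact fields $\mathfrak X(M)$ are $L^2$-dense in $\mathfrak X^1(M)$ (the same density used in Step~1 of the paper), so one can take $w$ close to $B$ in $L^2$, which gives $\int_M B\cdot w\,d\mu$ close to $\int_M |B|^2\,d\mu>0$. Now set $\omega_t:=t\,{\rm curl}\,w$. Then $\omega_t$ is smooth, exact and divergence-free, so $\omega_t\in\mathfrak X(M)\subset\mathfrak X^1(M)$, with ${\rm curl}^{-1}\omega_t=t\,w$ and hence $H(\omega_t,B)=t\int_M B\cdot w\,d\mu$, which takes every real value as $t$ ranges over $\mathbb R$. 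With this substitution the rest of your argument goes through unchanged: $f(h,\cdot)$ is constant for each attained value $h=H(B,B)$, the definition $g(h):=f(h,0)$ is $C^1$, and the degenerate case $B\equiv 0$ is handled by $H(\omega,0)=0$, giving $F(0)=f(0,0)=g(0)$.
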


\begin{remark}
By considering the subgroup $\text{SDiff}(M)\times \{0\}$ of $G$, one obtains the main theorem of \cite{EnPeTo}, which states that  any regular integral invariant of volume-preserving
transformations is a function of the helicity.
Indeed, the $G$-coadjoint action on the magnetic field $B$ coincides with the coadjoint action of $\text{SDiff}(M)$
on the vorticity, so Corollary~\ref{cor} on the magnetic helicity generalizes the corresponding result on the uniqueness of the
helicity invariant in ideal hydrodynamics.

The proof of Theorem \ref{thm:main} presented in Section~\ref{sec:proof} follows the  strategy of~\cite{EnPeTo}, but it is technically more involved since now we have two elements in the Casimir basis (the magnetic helicity and the cross-helicity), while the MHD semidirect group action is much more complicated.
\end{remark}

\begin{remark}
The use of the space $\mathfrak X^1(M)$ (endowed with the $C^1$ topology) is key in our proof of Theorem~\ref{thm:main}. 
The reason is that a main ingredient of the proof is Lemma~\ref{lem:const_firsInteg} below 
based on the existence of a residual subset of vector fields with special dynamical properties among exact divergence-free $C^1$ fields. 
The proof of the lemma makes use of a theorem by Mario Bessa~\cite{Besa} that is known to hold only for $C^1$ divergence-free 
vector fields with the $C^1$ topology, and is actually false for $C^4$ divergence-free vector fields with the $C^4$ topology due 
to the KAM theorem. A  result on the helicity uniqueness in the  $C^\infty$-setting, proved in~\cite{PeYa} for $C^\infty$ 
hydrodynamics using different tools from the theory of dynamical systems, allows one to similarly adjust  Lemma~\ref{lem:const_firsInteg} 
and generalize Theorem~\ref{thm:main} to $C^\infty$ magnetohydrodynamics as well.
\end{remark}

\section{Proof of the main theorem}\label{sec:proof}

\textbf{Step 1:} Consider a one-parameter family $\phi_t$ of elements on the semidirect product group $G$.
Let $F$ be a functional invariant under the coadjoint action of this family on the corresponding space of pairs $(\omega,B)$, i.e.
$$
F(\widetilde{\rm Ad}^*_{\phi_t}(\omega,B))=F(\omega,B)
$$
for all $t\in\mathbb R$. We assume that $\frac {d\phi_t}{dt}|_{t=0}=(v,A)$ is a pair of divergence-free vector fields and
$\phi_0={\rm id}$. Recall that $\mathfrak X(M)$ denotes the space of smooth exact divergence-free vector fields on $M$. 
For a pair of elements $(\omega,B)\in \mathfrak X(M)\times \mathfrak X(M)\subset \mathfrak X^1(M)\times \mathfrak X^1(M)$, 
one can take the time derivative of the expression above and evaluate it at $t=0$:

\begin{equation}
\begin{array}{rcl}
0&=&\frac {d}{dt}\Big|_{t=0}F(\widetilde{\rm Ad}^*_{\phi_t}(\omega,B))=(DF)_{(\omega,B)}(\widetilde{\rm ad}^*_{(v,A)}(\omega,B))\\\\
&=&\int_M \Big (K_1(\omega,B)\cdot ([\omega,v]-[A,B])-K_2(\omega,B)\cdot [v,B]\Big)d\mu\\\\
&=&-\int_M({\rm curl}\,K_1\times \omega+{\rm curl}\,K_2\times B)\cdot vd\mu-\int_M({\rm curl}\,K_1\times B)\cdot A \,d\mu\,.
\end{array}
\end{equation}

In the second line of the above computation we have used the definition (cf. Equation~\eqref{eq:regInt}) 
of the differential of a regular integral invariant and the expression~\eqref{eq:coadjoint} of the coadjoint 
action $\widetilde{\rm ad}^*_{(v,A)}$ on $(\omega,B)\in \mathfrak X(M)\times \mathfrak X(M)$, expressed in terms of vector fields. To pass to the third line we have used the identities that relate the commutator of divergence-free vector fields with the curl of the vector product, e.g. $[\omega,v]={\rm curl}\,(v\times\omega)$, and integrated by parts.

Since $v$ and $A$ are arbitrary divergence-free vector fields, the above computation shows that the vector fields ${\rm curl}\,K_1\times\omega+{\rm curl}\,K_2\times B$ and ${\rm curl}\,K_1\times B$ are $L^2$ orthogonal to all divergence-free vector fields on $M$. Then, by the Hodge decomposition theorem,  there exist two smooth
functions $P$ and $Q$ on $M$ such that
\begin{equation}\label{eq:first_Integral}
\left\{
  \begin{array}{l}

         {\rm curl}\,K_1\times\omega+ {\rm curl}\,K_2\times B=\nabla P\,,\\\\
		               {\rm curl}\,K_1\times B=\nabla Q \,.

\end{array} \right.
\end{equation}
This holds for all pairs $(\omega,B)\in \mathfrak X(M)\times \mathfrak X(M)$ of smooth exact divergence-free fields. Finally, the fact that the space $\mathfrak X(M)$ of smooth exact fields is an $L^2$ dense subset of the space $\mathfrak X^1(M)$ of $C^1$ exact fields, and the continuity of the functional, imply that for any pair $(\omega,B)\in\mathfrak X^1(M)\times\mathfrak X^1(M)$, there exists two $C^1$ functions $P$ and $Q$ on $M$ such that the equations~\eqref{eq:first_Integral} hold.

\medskip
\noindent \textbf{Step 2:}
Let us study the second equation in \eqref{eq:first_Integral} in more detail. Fix any field $\omega\in \mathfrak X^1(M)$ and any not identically zero field $B\in \mathfrak X^1(M)$. In what follows, we use the symbol $\mathfrak X^1_0(M)$ to denote the space of exact $C^1$ fields on $M$ that are not identically zero. The following lemma is the key result of this step; in its proof we will invoke Bessa's theorem~\cite{Besa}, where the use of $C^1$ divergence-free fields is essential.

\begin{lemma}\label{lem:const_firsInteg}
Assume that there are a $C^1$ function $J$ on $M$ and a  map $K:\mathfrak X^1(M)\times\mathfrak X^1(M)\rightarrow\mathfrak X^1(M)$
continuously  depending on the pair of fields  $(\omega,B)\in\mathfrak X^1(M)\times\mathfrak X^1_0(M)$ and
satisfying the equation
\begin{equation}\label{eq:crossproduct}
 {\rm curl}\,K(\omega,B)\times B=\nabla J\,.
\end{equation}
Then there is a  constant $C$, continuously depending on $(\omega,B)$, i.e.
there is  a continuous functional $C:\mathfrak X^1(M)\times\mathfrak X^1_0(M)\rightarrow\mathbb R$,  such that
\begin{equation}\label{eq:proportional}
 {\rm curl}\,K(\omega,B)=C(\omega,B)\;B
\end{equation}
on the whole manifold $M$.
\end{lemma}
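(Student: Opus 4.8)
The plan is to extract from equation~\eqref{eq:crossproduct} two \emph{first integral} constraints and then use a genericity result for $C^1$ divergence-free fields to force them to be trivial. Write $W:=\mathrm{curl}\,K(\omega,B)$, so that $W\times B=\nabla J$ with $W$ a continuous divergence-free field (since $K\in\mathfrak X^1(M)$, its curl is $C^0$). Dotting both sides with $B$ gives $\nabla J\cdot B=(W\times B)\cdot B=0$, i.e. $\mathcal L_B J=0$: the $C^1$ function $J$ is a first integral of the field $B$. This is the observation that converts the algebraic identity into a dynamical constraint on $B$.

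Next I would invoke the dynamical input. By Bessa's theorem~\cite{Besa}, there is a residual (hence $C^1$-dense) subset $\mathcal R\subset\mathfrak X^1_0(M)$ of exact divergence-free $C^1$ fields each of which admits no nonconstant continuous first integral (e.g. because it is topologically transitive, so that any continuous function constant along orbits is constant on a dense orbit, hence everywhere). For $B\in\mathcal R$ the function $J$ must then be constant, so $\nabla J\equiv 0$ and hence $W\times B=0$ on all of $M$. Consequently $W$ and $B$ are everywhere parallel: on the open dense set $\{B\neq0\}$ we may write $W=c\,B$ with $c=(W\cdot B)/|B|^2$ continuous. To see that $c$ is constant, note that both $W$ and $B$ are divergence-free, so $i_Wd\mu$ is closed; since $i_Wd\mu=c\,i_Bd\mu$ and $d(i_Bd\mu)=0$, we get $dc\wedge i_Bd\mu=0$, that is $(\nabla c\cdot B)\,d\mu=0$, so $\mathcal L_Bc=0$ in the weak sense (which suffices by continuity of $c$). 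Thus $c$ is itself a continuous first integral of $B$, and the absence of nonconstant ones forces $c\equiv C$ for a constant $C$. Extending by continuity of $W$ across the nowhere dense zero set of $B$ yields $\mathrm{curl}\,K(\omega,B)=C\,B$ on all of $M$, for every $B\in\mathcal R$.

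Finally I would remove the genericity assumption and establish the continuous dependence of $C$. With $\omega$ held fixed and $B\in\mathfrak X^1_0(M)$ arbitrary, set
\[
C(\omega,B):=\frac{\int_M \mathrm{curl}\,K(\omega,B)\cdot B\,d\mu}{\int_M|B|^2\,d\mu}\,,
\]
which is well defined (the denominator is positive) and continuous in $(\omega,B)$, because $K$ depends continuously on $(\omega,B)$ in the $C^1$ topology and $\mathrm{curl}$ maps continuously into $C^0$. Choosing $B_n\to B$ in $C^1$ with $B_n\in\mathcal R$, the previous paragraph gives $\mathrm{curl}\,K(\omega,B_n)=C(\omega,B_n)\,B_n$; passing to the limit (the left-hand side converges in $C^0$ and the right-hand side converges to $C(\omega,B)\,B$) yields the desired identity~\eqref{eq:proportional} for every $(\omega,B)$.

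The main obstacle is the second paragraph: everything hinges on producing, via Bessa's theorem, a dense family of exact divergence-free $C^1$ fields with no nonconstant continuous first integrals, and this is exactly where the $C^1$ topology is indispensable, the analogous statement failing in higher regularity because of KAM phenomena. The remaining steps, namely turning~\eqref{eq:crossproduct} into the conditions $\mathcal L_BJ=0$ and $\mathcal L_Bc=0$ and carrying out the density/continuity extension, are comparatively routine; the only delicate point is the weak interpretation of $\nabla c\cdot B=0$, since $c$ is a priori merely continuous.
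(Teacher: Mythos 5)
Your proposal is correct and follows essentially the same strategy as the paper's proof: deduce that $J$ is a first integral of $B$, invoke Bessa's theorem to get a residual set of topologically transitive exact $C^1$ fields admitting no nonconstant continuous first integrals, conclude $\mathrm{curl}\,K(\omega,B)=C\,B$ for generic $B$, and extend by density and continuity. The only (harmless) variations are technical: where the paper uses the flow-box argument of Enciso--Peralta-Salas--Torres de Lizaur to show that the proportionality factor $c=(\mathrm{curl}\,K\cdot B)/|B|^2$ is a first integral, you derive the weak identity $\nabla c\cdot B=0$ from closedness of $i_{\mathrm{curl}\,K}\,d\mu$ and $i_B\,d\mu$ (note that passing from this weak identity to constancy of the merely continuous $c$ along orbits still requires exactly that flow-box/mollification argument, as you flag); and your final extension via the explicit formula $C(\omega,B)=\int_M \mathrm{curl}\,K\cdot B\,d\mu\,/\int_M|B|^2\,d\mu$ and a limit $B_n\to B$ with $B_n\in\mathcal R$ is a slightly cleaner packaging of the paper's two-stage density argument.
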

\begin{proof}  From Equation~\eqref{eq:crossproduct}, we obtain $B\cdot\nabla J=0$, that is $J$ is a first integral of the vector field $B$. If $J$ is a constant on $M$ (a trivial first integral), then $\text{curl}\;K(\omega,B)\times B=0$, which implies that
$$\text{curl}\;K(\omega,B)=j(x)\;B\,,\;\;\text{with}\;j(x)=\frac{B\cdot K(\omega,B)}{|B|^2}\,,$$
for any point $x\in M\backslash  B^{-1}(0)$. It is apparent that the function $j\in C^0(M\backslash  B^{-1}(0))$ depends continuously on the pair $(\omega,B)$. Using a flow box argument as in the proof of~\cite[Step~2]{EnPeTo}, we conclude that the function $j$ is a continuous first integral of the vector field $B$ (in the complement of its zero set). Accordingly, for each $B\in\mathfrak X^1_0(M)$, either $J$ is a nontrivial $C^1$ first integral of $B$ on $M$ or $j$ is a $C^0$ first integral of $B$ in $M\setminus B^{-1}(0)$.

Furthermore, according to \cite{Besa}, there exists a residual (and hence dense)
set $\mathcal R$ of vector fields in $\mathfrak X^1(M)$ such that any $B \in \mathcal R$ is topologically transitive and
its zero set consists of finitely many hyperbolic points. Therefore, any continuous first integral of $B\in\mathcal R$ must be a constant, and so for any $B\in\mathcal R$ and any $\omega\in \mathfrak X^1(M)$, one has that $\text{curl}\;K(\omega,B)\times B=0$ on $M$. The assumption that the kernel $K$ is continuous, then implies that $\text{curl}\;K(\omega,B)\times B=0$ on $M$ for any $(\omega,B)\in\mathfrak X^1(M)\times \mathfrak X^1_0(M)$.

Now, we can define $j(x)\in C^0(M\setminus B^{-1}(0))$ as above such that $\text{curl}\;K(\omega,B)=j(x)\;B$ on $M\setminus B^{-1}(0)$. Moreover, arguing as before, one has that $\text{curl}\;K(\omega,B)=C(\omega,B)\;B$ on $M\backslash  B^{-1}(0)$, where $C(\omega,B)$ is a constant, for any pair $(\omega,B)\in \mathfrak X^1(M)\times \mathcal R$. Actually, since the zero set of $B$ consists of finitely many points, this identity holds on the whole of $M$. Since the map $j$ is continuous in $\mathfrak X^1(M)\times \mathfrak X^1_0(M)$, and is a constant $C(\omega, B)$ depending on $(\omega,B)$ on a dense subset of $\mathfrak X^1(M)\times \mathfrak X^1_0(M)$, it must also be a constant (depending on $(\omega,B)$) for all $(\omega,B)\in\mathfrak X^1(M)\times \mathfrak X^1_0(M)$, as we wanted to prove.
\end{proof}

\begin{remark}
This is a parametric version of the result proved in \cite[Step~3]{EnPeTo}, where a similar statement was shown for a vorticity field.
Here we prove it for a magnetic field $B$, regarding the vorticity $\omega$ as a parameter.
\end{remark}

\medskip
\noindent \textbf{Step 3:}
In this step we first use Lemma~\ref{lem:const_firsInteg} to show that ${\rm curl}\,K_1$ and ${\rm curl}\,K_2$ 
are linear combinations of $\omega$ and $B$ (with coefficients that are constants on $M$ depending on $\omega$ and $B$). 
After that, we complete the proof of Theorem~\ref{thm:main} assuming a property to be proved in Step~4.

\begin{lemma}\label{lem:derivative}
The kernel $K=(K_1,K_2)$ corresponding to the functional $F$ has the following property: there are constants $C_1$ and $C_2$ continuously depending on the fields $\omega\in \mathfrak X^1(M)$ and $B\in \mathfrak X^1_0(M)$, such that
$$
{\rm curl}\,K_1(\omega,B)=C_1(\omega,B)\;B
$$
and
$$
{\rm curl}\,K_2(\omega,B)=C_1(\omega,B)\;\omega+C_2(\omega,B)\;B
$$
for all $x\in M$.
\end{lemma}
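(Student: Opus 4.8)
The plan is to apply Lemma~\ref{lem:const_firsInteg} twice, feeding it the two equations of~\eqref{eq:first_Integral} in turn. First I would examine the second equation ${\rm curl}\,K_1\times B=\nabla Q$. This is exactly of the form treated in Lemma~\ref{lem:const_firsInteg}, with the continuous map $K_1$ in the role of $K$ and the function $Q$ in the role of $J$. Applying the lemma immediately yields a constant $C_1(\omega,B)$, depending continuously on the pair, such that ${\rm curl}\,K_1(\omega,B)=C_1(\omega,B)\,B$ on all of $M$, which is the first asserted identity.

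Next I would substitute this into the first equation of~\eqref{eq:first_Integral}. Using $C_1 B\times\omega=-C_1\,\omega\times B$, the equation rearranges to $({\rm curl}\,K_2-C_1\omega)\times B=\nabla P$. The key observation is that the field ${\rm curl}\,K_2-C_1\omega$ is itself the curl of a continuous map: since $\omega$ is exact, $\omega={\rm curl}\,v$ for the unique $v\in\mathfrak X^1(M)$ with ${\rm curl}\,v=\omega$, and since $C_1$ is constant on $M$ we get $C_1\omega={\rm curl}(C_1 v)$. Hence ${\rm curl}\,K_2-C_1\omega={\rm curl}(K_2-C_1 v)$, and I set $\widetilde K:=K_2-C_1 v$, which depends continuously on $(\omega,B)$ (using the continuity of $K_2$, of the functional $C_1$ obtained in the first step, and of $v={\rm curl}^{-1}\omega$). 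Since $\widetilde K$ satisfies ${\rm curl}\,\widetilde K\times B=\nabla P$, a second application of Lemma~\ref{lem:const_firsInteg} produces a continuous constant $C_2(\omega,B)$ with ${\rm curl}\,\widetilde K=C_2 B$, that is, ${\rm curl}\,K_2=C_1\omega+C_2 B$, which is the second asserted identity.

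The main obstacle is the middle step, namely verifying that ${\rm curl}\,K_2-C_1\omega$ genuinely satisfies the hypotheses of Lemma~\ref{lem:const_firsInteg}. Two points require care. First, one must see that it is a curl of some map; this relies on the \emph{spatial} constancy of $C_1$ on $M$ (so that $C_1\omega$ remains a curl) together with the exactness of $\omega$ (so that ${\rm curl}^{-1}\omega$ is well defined and lies in $\mathfrak X^1(M)$). Second, the resulting map $\widetilde K$ must depend \emph{continuously} on $(\omega,B)$, and this is precisely where the continuity of the functional $C_1(\omega,B)$ guaranteed by the first invocation is indispensable; without it the hypotheses of the second invocation would not be met. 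Once these two points are secured, the conclusion follows mechanically from the two applications of Lemma~\ref{lem:const_firsInteg}.
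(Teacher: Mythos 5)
Your proposal is correct and follows essentially the same route as the paper: apply Lemma~\ref{lem:const_firsInteg} to the second equation of~\eqref{eq:first_Integral} to get ${\rm curl}\,K_1=C_1B$, then substitute into the first equation and apply the lemma again to $K_2-C_1\,{\rm curl}^{-1}\omega$. Your explicit checks that $C_1\omega$ is a curl (using the spatial constancy of $C_1$ and exactness of $\omega$) and that $\widetilde K$ depends continuously on $(\omega,B)$ are exactly the points implicit in the paper's argument.
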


\begin{proof}
First, applying Lemma~\ref{lem:const_firsInteg} to the second equation in~\eqref{eq:first_Integral}, we obtain that there exists a continuous functional $C_1:\mathfrak X^1(M)\times\mathfrak X^1_0(M)\rightarrow\mathbb R$  such that
$$
\text{curl}\;K_1(\omega,B)=C_1(\omega,B)\;B
$$
for all $x\in M$. Then, plugging this expression into the first equation of~\eqref{eq:first_Integral}, we get
$$
\text{curl}\;(K_2-C_1(\omega,B)\text{curl}^{-1}\omega)\times B=\nabla P\,,
$$
so using Lemma~\ref{lem:const_firsInteg} again, we obtain that there exists another continuous functional $C_2:\mathfrak X^1(M)\times\mathfrak X^1_0(M)\rightarrow\mathbb R$ such that
$$
\text{curl}\;(K_2-C_1(\omega,B)\text{curl}^{-1}\omega)= C_2(\omega,B)\;B\,,
$$
and therefore,
$$
\text{curl}\;K_2(\omega,B)=C_1(\omega,B)\;\omega+ C_2(\omega,B)\;B
$$
for all $x\in M$, as required.
\end{proof}

Finally, the main theorem follows from the formula for
the Fr\'{e}chet derivative of $F$ at the pair $(\omega,B)$ and the connectedness of common level sets of the helicity and cross-helicity, which is proved in the next step. Indeed, using Lemma~\ref{lem:derivative} we compute
the Fr\'{e}chet derivative of $F$ at $(\omega,B)\in \mathfrak X^1(M)\times \mathfrak X^1_0(M)$ as follows:

\begin{equation}\label{eq:derivative}
\begin{array}{l}
DF_{(\omega,B)}(u,b)=\int_M (K_1(\omega,B)\cdot u+K_2(\omega,B)\cdot b)\,d\mu\\\\
=\int_M C_1(\omega,B)(\text{curl}^{-1}u\cdot B+\text{curl}^{-1}\omega\cdot b)d\mu+\int_M C_2(\omega,B)\text{curl}^{-1}B\cdot b\,d\mu\\\\
=C_1(\omega,B)(DH)_{(\omega,B)}(u,b)+\frac 12 C_2(\omega,B)(DH)_{(B,B)}(b)\,.
\end{array}
\end{equation}
To pass to the second line we have used Lemma~\ref{lem:derivative} and integrated by parts. In the third line, we have substituted the expressions of the derivative for the cross-helicity and magnetic helicity:
\begin{align*}
(DH)_{(\omega,B)}(u,b)&=\int_M (\text{curl}^{-1}u\cdot B+\text{curl}^{-1}\omega\cdot b)d\mu\,,\\
(DH)_{(B,B)}(b)&=2\int_M \text{curl}^{-1}B\cdot b\,d\mu\,.
\end{align*}
Now the proof of the main theorem can be completed assuming the aforementioned connectedness of level sets of mixed helicity.
Namely, take any two pairs of vector fields at the same common level of the magnetic helicity and cross-helicity and connect them by
a path $(\omega_t, B_t)\in \mathfrak X^1(M)\times \mathfrak X^1_0(M)$. The differential of the functional $F$ along this path is given by
$$
(DF)_{(\omega,B)}(\dot\omega_t, \dot B_t)
=C_1(DH)_{(\omega,B)}(\dot\omega_t,\dot B_t)+ C_3(DH)_{(B,B)}(\dot B_t)\,,
$$
where $(\dot\omega_t, \dot B_t )$ are the tangent vectors along the path, and $C_3:=C_2/2$.
By choosing the path in such a way that the values $H(\omega_t,B_t)$ and $H(B_t,B_t)$ remain constant, the previous computation implies that $F$ is also constant on any common level set. Accordingly, there exists a function $f : \mathbb R\times\mathbb R\rightarrow\mathbb R$
which assigns a value of $F$ to each value of the mixed helicity, 
i.e., $F(\omega,B) =f(H(\omega,B),H(B,B))$ for all $(\omega,B)\in \mathfrak X^1(M)\times \mathfrak X^1_0(M)$. 
To include the case of identically zero $B$, we can take any pair $(\omega_0,B_0)\in \mathfrak X^1(M)\times \mathfrak X^1_0(M)$ such that $H(\omega_0,B_0)=H(B_0,B_0)=0$; since $H(\omega_0,(1-s)B_0)=H((1-s)B_0,(1-s)B_0)=0$ for $s\in[0,1]$, the continuity of the functional $F$ on $\mathfrak X^1(M)\times \mathfrak X^1(M)$ implies that $F(\omega,0)=f(0,0)$, and hence the property $F(\omega,B) =f(H(\omega,B),H(B,B))$ holds for all $(\omega,B)\in \mathfrak X^1(M)\times \mathfrak X^1(M)$. Additionally, $f$ is of class $C^1$ since $F$ itself is a $C^1$ functional.
The main theorem then follows once we prove Proposition~\ref{lem:pathconn} below on the connectedness of the level sets of the mixed helicity.

\medskip

\noindent \textbf{Step 4:}
 Define the mixed helicity as the $\Bbb R^2$-valued quadratic form on $\mathfrak X^1(M)$:
\begin{equation}\label{eq:mix_heli}
\begin{array}{rcl}
\mathfrak H:\mathfrak X^1(M)\times\mathfrak X^1(M)&\rightarrow& \mathbb R\times\mathbb R\\\
(\omega,B)&\mapsto&(H(B,B), H(\omega,B))\,.
\end{array}
\end{equation}

Our goal in this step is to prove that the (infinite-dimensional) level sets of the mixed helicity are path-connected:

\begin{proposition}\label{lem:pathconn}
The level sets of the mixed helicity $\mathfrak H$ are path-connected subsets of $\mathfrak X^1(M)\times\mathfrak X^1_0(M)$
(and hence of $\mathfrak X^1(M)\times\mathfrak X^1(M)$).
\end{proposition}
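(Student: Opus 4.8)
The plan is to connect any two pairs $(\omega_0,B_0)$ and $(\omega_1,B_1)$ lying on the same level set of $\mathfrak H$ by a $C^1$-continuous path inside $\mathfrak X^1(M)\times\mathfrak X^1_0(M)$ along which both helicities stay constant. I would organize this around a single observation: the magnetic helicity $H(B,B)$ depends only on $B$, whereas for a fixed nonvanishing $B$ the cross-helicity $\omega\mapsto H(\omega,B)$ is a nonzero affine-linear functional of $\omega$. Hence the only genuinely global problem is to move the magnetic component $B_0$ to $B_1$ keeping $H(B,B)$ fixed and $B$ nowhere identically zero; once such a path $B_t$ is in hand, the vorticity component $\omega_t$ can be slaved to it so as to keep the cross-helicity constant. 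I would therefore first dispose of the vorticity step and then concentrate on the magnetic step, which is the real content.

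For the vorticity step, suppose $B_t$ is a path with $B_t\in\mathfrak X^1_0(M)$ and $H(B_t,B_t)\equiv\mu$, and write $c:=H(\omega_0,B_0)=H(\omega_1,B_1)$. I would set $\omega_t:=(1-t)\omega_0+t\omega_1+\alpha(t)\,{\rm curl}^{-1}B_t$, using the field ${\rm curl}^{-1}B_t$, which lies in $\mathfrak X^1(M)$ (it is one degree smoother than $B_t$) and satisfies $H({\rm curl}^{-1}B_t,B_t)=\|{\rm curl}^{-1}B_t\|_{L^2}^2>0$, since ${\rm curl}^{-1}$ is injective and $B_t\neq0$. Choosing $\alpha(t):=\big(c-H((1-t)\omega_0+t\omega_1,\,B_t)\big)/\|{\rm curl}^{-1}B_t\|_{L^2}^2$ makes $H(\omega_t,B_t)\equiv c$, and since $\alpha(0)=\alpha(1)=0$ the endpoints are exactly $\omega_0$ and $\omega_1$. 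As $\alpha$ is continuous and $B_t$ is $C^1$-continuous in $t$, the combined path $(\omega_t,B_t)$ is $C^1$-continuous and remains on the common level set throughout.

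The hard part is the magnetic step: showing that the level set $\{B\in\mathfrak X^1_0(M): H(B,B)=\mu\}$ is path-connected in the $C^1$ topology. The conceptual input is the spectral theory of ${\rm curl}$ on exact divergence-free fields: it is self-adjoint with discrete real spectrum accumulating at both $+\infty$ and $-\infty$ and with smooth eigenfields, so $H(B,B)=\langle B,{\rm curl}^{-1}B\rangle_{L^2}$ is an indefinite quadratic form whose positive and negative eigenspaces $V_+,V_-$ are both infinite-dimensional. I would exploit this indefiniteness exactly as in the finite-dimensional model, where the level sets of a nondegenerate form with at least two positive and two negative directions are connected. Concretely, for $\mu>0$ I would first deform a given smooth $B=P_+B+P_-B$ along $B(t)=c(t)\,P_+B+(1-t)\,P_-B$, choosing $c(t)>0$ so that $H(B(t),B(t))\equiv\mu$; since the cross term vanishes by $H$-orthogonality of $V_+$ and $V_-$, this keeps the positive part nonzero and lands at a ``pure positive'' field in $V_+$. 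Two such pure-positive fields are then joined inside the infinite-dimensional ellipsoid $\{b\in V_+:H(b,b)=\mu\}$ by a normalized interpolation (routing through a third point if they happen to be antipodal), using that spheres in infinite-dimensional Hilbert spaces are path-connected. The cases $\mu<0$ and $\mu=0$ are handled symmetrically, routing through $V_-$, respectively through configurations with both parts nonzero and equal in $H$-norm.

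The main obstacle is reconciling this $L^2$-flavored argument with the required $C^1$-continuity and the constraint $B_t\neq0$, since the spectral projections $P_\pm$ are not bounded on $C^1$ and cannot be applied directly to a merely $C^1$ field. I would overcome this by a preliminary smoothing step: given $B\in\mathfrak X^1_0(M)$ with $H(B,B)=\mu$, choose a nearby smooth $\widetilde B$ and connect $B$ to a smooth field of helicity exactly $\mu$ by the short path $(1-t)B+t\widetilde B+\beta(t)g$, where $g$ is a fixed smooth field with $H(g,g)\neq0$ and $\beta(t)$ is the small scalar solving $H(\cdot,\cdot)=\mu$ along the segment (here $\beta(0)=0$, and the endpoint $\widetilde B+\beta(1)g$ is the desired smooth representative); staying close to $B\neq0$ guarantees the path avoids the zero field. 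After this reduction every field is smooth, $P_\pm$ preserve smoothness, and all the paths above are finite combinations of fixed smooth fields with continuously varying scalar coefficients, hence manifestly $C^1$-continuous and nonvanishing. Assembling the magnetic path with the vorticity slaving of the second paragraph then yields the desired path in $\mathfrak X^1(M)\times\mathfrak X^1_0(M)$; the parenthetical extension to the level containing $B\equiv0$ follows by connecting any admissible $(\omega,B)$ to $(\omega,0)$ along $s\mapsto(\omega,sB)$, which preserves the vanishing helicities.
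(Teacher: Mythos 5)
Your proposal is correct in substance, but it follows a genuinely different route from the paper, so a comparison is worthwhile. The paper's proof avoids spectral theory almost entirely: it picks a \emph{single} auxiliary field $\beta$ in the subspace $\mathcal S\subset\mathfrak X^1(M)$ of fields $H$-orthogonal to all four of $\omega_0,\omega_1,B_0,B_1$ (a subspace of codimension at most $4$, on which $H$ is still indefinite because the positive and negative subspaces of the helicity form are infinite-dimensional), normalizes $H(\beta,\beta)=1$, and writes completely explicit two-term interpolations $B_t=(1-t)B_0+f(t)\beta$ and $\omega_t=(1-t)\omega_0+g(t)\beta$ with $f(t)=\sqrt{(2t-t^2)a}$, $g(t)=\sqrt{(2t-t^2)b^2/a}$, routing both components through the common waypoint $(\xi,\sqrt a\,\beta)$. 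This keeps every field on the path a linear combination of the given $C^1$ data and one fixed auxiliary field, so no regularity issues ever arise. You instead decouple the two components: you connect the magnetic fields by the classical ``hyperboloids of an indefinite form are connected'' argument via the spectral projections $P_\pm$ of ${\rm curl}$, and then slave $\omega_t$ to $B_t$ using the correction direction ${\rm curl}^{-1}B_t$, exploiting that $H({\rm curl}^{-1}B_t,B_t)=\|{\rm curl}^{-1}B_t\|_{L^2}^2>0$. Your vorticity step is arguably cleaner than the paper's (it joins $\omega_0$ to $\omega_1$ directly, with no orthogonality conditions imposed on the auxiliary data relative to the $\omega$'s), and your treatment of the level $\mathfrak H^{-1}(0,0)$ including $B\equiv 0$ is more self-contained than the paper's, which defers that case to a separate continuity argument in Step~3. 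The price is the extra machinery in the magnetic step: you correctly flag that $P_\pm$ are not bounded on $C^1$ and insert a smoothing detour, but two assertions there still need justification --- that $P_\pm$ preserve smoothness (true because $P_\pm=\tfrac12\bigl(1\pm{\rm curl}\,(-\Delta)^{-1/2}\bigr)$ are zeroth-order pseudodifferential operators on exact divergence-free fields), and that the scalar $\beta(t)$ in your smoothing segment actually exists as a real root of the relevant quadratic, which requires choosing $g$ with $H(B,g)\neq0$ so the discriminant stays positive for $\widetilde B$ close to $B$. These are repairable details rather than gaps; the paper's single-auxiliary-field construction simply sidesteps them.
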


\begin{proof}
Let $(\omega_0,B_0)$ and $(\omega_1,B_1)$ be two pairs of vector fields in $\mathfrak X^1(M)\times\mathfrak X^1_0(M)\subset \mathfrak X^1(M)\times\mathfrak X^1(M)$ with the same mixed helicity, i.e.
$$
\mathfrak H(\omega_0,B_0)=\mathfrak H(\omega_1,B_1)=(a,b)\,.
$$
In order to prove the connectedness of the $(a,b)$-level set
we introduce two auxiliary vector fields $\xi\in \mathfrak X^1(M)$ and $\beta\in \mathfrak X^1_0(M)$ with the same value of mixed helicity, that is
$\mathfrak H(\xi,\beta)=(a,b)$, which can be connected with each pair $(\omega_i,B_i)$ by a path in $\mathfrak X^1(M)\times\mathfrak X^1_0(M)$ of constant mixed helicity. The only ingredient we need in the proof is the property that the curl operator acting on the space
$\mathfrak X^1(M)$ of exact fields has infinitely many positive and negative eigenvalues, which implies that the positive and negative subspaces of the helicity quadratic form $H(u,u)$
on $\mathfrak X^1(M)$ are infinite-dimensional.

To fix ideas, assume that both $a$ and $b$ are positive (other signs and the cases of vanishing  $a$ or $b$ are treated similarly).
Consider the subspace $\mathcal S\subset \mathfrak X^1(M)$ of vector fields orthogonal to
the four vector fields $\omega_0,\omega_1,B_0,B_1$ with respect to the helicity quadratic form $H$, that is
$$
\mathcal S:=\{u\in \mathfrak X^1(M)~|~ \int_M u\cdot \text{curl}^{-1}\omega_kd\mu=\int_M u\cdot \text{curl}^{-1}B_kd\mu=0 
~~\text{for}~k=0,1\}\,.
$$
This space has codimension $\le 4$ and hence the restriction of the helicity $H$ to this subspace is still sign-indefinite. Hence,
one can choose a field $\beta\in \mathcal S$ such that $H(\beta,\beta)=1$ (for other signs of $a$ and $b$
one needs to choose  $H(\beta,\beta)=-1$ or  $H(\beta,\beta)=0$ for some non-zero field $\beta$).

Now define a family of vector fields $B_t:= (1-t)B_0+f(t)\beta$ for $t\in [0,1]$, and choose an appropriate function $f(t)$
so that the condition $H(B_t,B_t)=a$ holds for all $t$. Namely,
$$
H(B_t,B_t)=H\Big((1-t)B_0+f(t)\beta,(1-t)B_0+f(t)\beta\Big)
$$
$$
=(1-t)^2H(B_0,B_0)+f(t)^2H(\beta,\beta)= (1-t)^2a+f(t)^2\,,
$$
where we have used that $H(B_0,B_0)=a$, $H(B_0,\beta)=H(\beta,B_0)=0$ and $H(\beta,\beta)=1$.
Then, taking $f(t):=\sqrt{(2t-t^2)a}$ for $t\in [0,1]$ we obtain a continuous family $B_t$ of fields in $\mathfrak X^1_0(M)$
that have constant helicity $a$ and connect $B_0$ and $\sqrt a\beta$. In the same way one can connect $\sqrt a\beta$ and $B_1$ for $t\in [1,2]$.

Now define a family of fields $\omega_t:=(1-t)\omega_0+ g(t)\beta$ for a function $g(t)$ with $t\in [0,1]$, starting at
$\omega_0$ and ending at
$\xi:=g(1)\beta$. The function $g(t)$ is to be chosen such that $H(\omega_t, B_t)=b$ for all $t\in [0,1]$. The condition we obtain is
$$
H(\omega_t,B_t)=H\Big((1-t)\omega_0+g(t)\beta,(1-t)B_0+f(t)\beta\Big)
$$
$$
=(1-t)^2H(\omega_0,B_0)+f(t)g(t)H(\beta,\beta)= (1-t)^2b+f(t)g(t)\,,
$$
where we have used that $H(\omega_0,B_0)=b$, $H(\beta,\beta)=1$ and $H(\omega_0,\beta)=H(\beta,B_0)=0$.
Then, taking $g(t):=\sqrt{{(2t-t^2)b^2}/{a}}$, we obtain a continuous family of fields $\omega_t\in \mathfrak X^1(M)$ that have the cross-helicity with $B_t$ independent of $t$ and equal to $b$, and connect $\omega_0$ with $\xi=b/\sqrt a\beta$. Similarly, we can connect $\xi$ with $\omega_1$ for $t\in [1,2]$.
We have hence shown that the level set $\mathfrak H^{-1}(a,b)$ of mixed helicity is path-connected. The connecting path can be smoothened out by adjusting this construction.

The cases of one or both $a$ and $b$ vanishing are analogous, but one may need to take vector fields $\beta$ and $\xi$ in $\mathcal S$ that are not proportional to each other. For example, for $a=b=0$, one can take linearly independent $\beta$ and $\xi$ such that $H(\beta,\beta)=H(\beta,\xi)=0$, and the following families of vector fields: $B_t=(1-t)B_0+t\beta$ and $\omega_t=(1-t)\omega_0+t\xi$, for $t\in[0,1]$ (and analogously for $(\omega_1,B_1)$).
%and the following families of vector fields: $B_t=B_0$ if $t\in[-1,0]$ and $B_t=(1-t)B_0+t\beta$ if $t\in[0,1]$, and $\omega_t=\omega_0+2(1+t)b\xi$ if %$t\in[-1,0]$ and $\omega_t=(1-t)\omega_0+(2-t)b\xi$ if $t\in[0,1]$.
Such fields $\beta,\xi$ exist because both positive and negative subspaces of the helicity
quadratic form are of dimension greater than 4 (in fact, infinite-dimensional).
The constructed path $(B_t, \omega_t)$ for $a=b=0$ proves the connectedness of the level sets in $\mathfrak X^1(M)\times\mathfrak X^1_0(M)$ (since $H(\omega,0)=0$ for all $\omega\in\mathfrak X^1(M)$), as required.
\end{proof}

%%%%%%%%%%%%%%%%

\section{Appendix: Hamiltonian formulation of the MHD equations}\label{sec:appendix}

\subsection{Hamiltonian setting of ideal hydrodynamics}
The  Euler equation $\partial_t v=-(v, \nabla) v-\nabla p$
 of ideal hydrodynamics can be regarded as an
equation of the geodesic flow on the group $\text{SDiff}(M)$
of volume-preserving diffeomorphisms of $M$ with respect to the right-invariant metric on the group
given by the $L^2$-energy of the velocity field. In these terms, the Euler equation
describes an evolution in the Lie algebra $\mathfrak X(M)$ of divergence-free vector fields on $M$,
tracing the geodesics in the group $\text{SDiff}(M)$.

This point of view implies the following Hamiltonian
reformulation of the Euler equation.
Consider the (regular) dual space $\mathfrak X^*(M)$
to the Lie algebra $\mathfrak X(M)$. This dual space $\mathfrak X^*(M)$ has a natural  description
as the space of cosets  $\mathfrak X^*(M)=\Omega^1(M) / d \Omega^0(M)$ of 1-forms modulo exact 1-forms on $M$,
where the coadjoint action of the group $\text{SDiff}(M)$ on the dual
 $\mathfrak X^*(M)$ is given by the change of coordinates in (cosets of) 1-forms on $M$
 by means of volume-preserving diffeomorphisms, see~\cite{arkh}.

Recall that the manifold $M$ is equipped with a Riemannian metric $(\cdot,\cdot)$, and it allows one to
identify the Lie algebra and its dual by means of the so-called inertia operator $\mathbb I: \mathfrak X(M)\to\mathfrak X^*(M)$.
Namely, given a vector field $v$ on $M$  one defines the 1-form $u=v^\flat$
as the pointwise inner product with vectors of the velocity field $v$:
$v^\flat(W): = (v,W)$ for all $W\in T_xM$, see details in \cite{arkh}.
%(Note that the inertia operator $\mathbb I:v\mapsto v^\flat$ takes divergence-free fields $v$  to co-closed 1-forms $v^\flat$.)
The Euler equation \eqref{idealEuler} rewritten on 1-forms is $\partial_t u=-L_v u-dP$
for the 1-form $u=v^\flat$ and an appropriate function $P$ on $M$.
In terms of the cosets of 1-forms $[u]=\{u+df\,|\,f\in C^\infty(M)\}\in \Omega^1(M) / d \Omega^0(M)$, the Euler equation looks as follows:
\begin{equation}\label{1-forms}
\partial_t [u]=-L_v [u]
\end{equation}
on the dual space $\mathfrak X^*(M)$, where $L_v$ is the Lie derivative along the field $v$.
The Euler equation is the Hamiltonian equation on the dual space $\mathfrak X^*(M)$ with respect to
the Lie-Poisson structure and with the Hamiltonian functional
$$
E ([u] ):=\frac 12 \langle [u], \mathbb I^{-1} [u]\rangle=\frac 12\int_M u(v)\,d\mu=\frac 12\int_M v\cdot v\,d\mu
$$
for $u=v^\flat$,
given by the kinetic energy of the fluid, see details in~\cite{arkh}.
The corresponding Hamiltonian operator is given by the Lie algebra coadjoint action ${\rm ad}^*_v$,
which coincides with the Lie derivative in the case of the diffeomorphisms group:
$$
{\rm ad}^*_v [u] =L_v[u]\,.
$$
Its symplectic leaves are coadjoint orbits of the corresponding group $\text{SDiff}(M)$.
%It defines a Hamiltonian evolution on the coadjoint orbits.

Furthermore, one can introduce the vorticity 2-form $\zeta:=du$ as the differential of the 1-form
$u=v^\flat$. The vorticity exact 2-form is well-defined for cosets $[u]$:
1-forms $u$ in the same coset have equal vorticities $\zeta=du$.
The corresponding Euler equation assumes the vorticity (or Helmholtz) form
\begin{equation}\label{idealvorticity}
\partial_t \zeta=-L_v \zeta\,,
\end{equation}
which means that the vorticity form is transported by (or ``frozen into") the fluid flow.

In 3D the vorticity 2-form $\zeta$ can be identified with the (divergence-free and exact) vorticity vector field $\omega={\rm curl}\,v$ by means of the volume form $d\mu$ on  $M$: $i_\omega d\mu=\zeta$. The corresponding Euler evolution of the vorticity field
is given by the same transport equation: $\partial_t \omega=-L_v \omega$.
The helicity
$$
H(\omega, \omega)=\int_M \omega\cdot \text{curl}^{-1} \omega\,d\mu=\int_M \omega\cdot v\,d\mu
$$
of the field $\omega$ is a Casimir (i.e. an invariant of the coadjoint action) on the dual space $\mathfrak X^*(M)$, and hence a first integral of the Euler equation.

\subsection{Hamiltonian setting of ideal MHD}\label{S:appmhd}

It turns out that the MHD equations (\ref{eq:mag_eq}) can be studied in the same manner \cite{ViDo}.
These equations are related to the semidirect-product group $G=\text{SDiff}(M)\ltimes \mathfrak X^*(M)$ of the volume-preserving diffeomorphisms group $\text{SDiff}(M)$
and the dual space $\mathfrak X^*(M)=\Omega^1(M)/d\Omega^0(M)$ of the Lie algebra $\mathfrak X(M)$
of divergence-free vector fields.
Its Lie algebra is ${\mathfrak g}=\mathfrak X(M)\ltimes\mathfrak X^*(M)$
and the corresponding dual space is
$$
{\mathfrak g}^*=\mathfrak X^*(M)\oplus\mathfrak X(M)=\Omega^1(M)/d\Omega^0(M)\oplus\mathfrak X(M)\,.
$$
The coadjoint Lie-algebra ${\mathfrak g}$-action on its dual ${\mathfrak g}^*$ is given by
\begin{equation}\label{eq:coadj_action1}
\text{ad}^*_{(v,[\alpha])}([u],B)=(L_v[u]-L_B[\alpha],-[v,B])\,,
\end{equation}
where $([u],B)\in {\mathfrak g}^*=\Omega^1(M)/d\Omega^0(M)\oplus\mathfrak X(M)$ and $(v,[\alpha])\in {\mathfrak g}=\mathfrak X(M)\ltimes\Omega^1(M)/d\Omega^0(M)$.

%The corresponding $ G$-action on ${\mathfrak g}^*$ is
%\begin{equation}\label{eq:coadj_action2}
%\text{Ad}^*_{(g,[\alpha])}([u],B)=(g^*([u]-L_B((g^{-1})^*[\alpha]),g_*B),
%\end{equation}
%where $(g,[\alpha])\in G$ and $([u],B)\in{\mathfrak g}^*=\Omega^1(M)/d\Omega^0(M)\oplus\mathfrak X(M)$.

Similarly to ideal hydrodynamics, one can regard the MHD equations as the equations of the geodesic flow on
the semidirect product group $G=\text{SDiff}(M)\ltimes \mathfrak X^*(M)$. It has the following Hamiltonian form
on the dual space $\mathfrak g^*$.
The Hamiltonian function is the following  quadratic energy
$$
E([u],B)=\frac 12\langle [u],\mathbb I^{-1}[u]\rangle+\frac 12\langle B,\mathbb I B\rangle
$$
on the dual space  ${\mathfrak g}^*$ with the Lie-Poisson structure.
Here the map $\mathbb I: \mathfrak X(M)\to \mathfrak X^*(M)=\Omega^1(M)/d\Omega^0(M)$
is the inertia operator from the (non-extended) Lie algebra  $\mathfrak X(M)$ of divergence-free vector fields to its dual.
The MHD equations can be written  on the dual space ${\mathfrak g}^*$ as follows:
\begin{equation}\label{eq:mag_eq2}
\left\{
  \begin{array}{l}
         \partial_t{[u]} = -L_v[u]+L_B[b],\\\\
		               \partial_t B = -[v,B],
\end{array} \right.
\end{equation}
where $v=\mathbb I^{-1}[u]\in \mathfrak X(M) $ and $[b]=\mathbb I B\in\mathfrak X^*(M)$.
We refer to \cite{arkh} and \cite{khch} for more details.

\begin{remark}
By taking the differential of the first equation in~\eqref{eq:mag_eq2} and recalling the definition of the vorticity two-form $\zeta=d[u]$, we obtain the equation
$\partial_t \zeta =-L_v\zeta +L_B\,d[b]$, i.e. this equation manifests that the vorticity $\zeta$, as well as
the vorticity field $\omega={\rm curl }\,v$  defined by $\zeta=i_\omega d\mu$, is not frozen into the flow, but differs from
the would-be-transported one by a term depending on $B$. Nevertheless, the  cross-helicity $H(\omega,B)$
of  exact fields $\omega$ and $B$ is conserved,
as well as the helicity $H(B,B)$ of the transported magnetic field $B$.
\end{remark}

\begin{remark}\label{rem:action}
Now we compute the coadjoint action in terms of the vector fields $(\omega, B)$, rather than the pairs $([u], B)$, involving cosets of 1-forms.
For simplicity in the exposition, we assume that the closed three-dimensional Riemannian manifold $M$ has trivial cohomology
$H^1(M)=0$.
To pass between the vorticity fields $\omega={\rm curl}\,v$ and the corresponding cosets $[u]$ of 1-forms $u=v^\flat$
 we  introduce the operator $\sigma: \omega \mapsto [u]$ defined by $u:=({\rm curl}^{-1}\omega)^\flat$, i.e.
$\sigma = \mathbb I\circ {\rm curl}^{-1}$. Note that although both $\mathbb I$ and ${\rm curl}^{-1}$ are metric dependent, the operator $\sigma$ depends on the volume form $d\mu$ only, since $\omega$ is the kernel of $d[u]$, i.e. $i_{\omega}d\mu=d[u]=d\,\sigma(\omega)$.

Using this operator $\sigma$, which is an isomorphism between $\mathfrak X(M)$ and $\mathfrak X^*(M)$, we have the following space identification,
$$
\mathfrak g=\mathfrak X(M)\ltimes \mathfrak X^*(M)\simeq\mathfrak X(M)\ltimes \sigma^{-1}(\mathfrak X^*(M))=\mathfrak X(M)\times\mathfrak X(M),
$$
and
$$
\mathfrak g^*=\mathfrak X^*(M)\oplus \mathfrak X(M) \simeq \sigma^{-1}(\mathfrak X^*(M))\oplus \mathfrak X(M)=\mathfrak X(M)\times\mathfrak X(M).
$$
The natural pairing $\langle\cdot,\cdot\rangle$ between $\mathfrak X(M)$ and $\mathfrak X^*(M)$ becomes
$$
\langle w,v\rangle=\langle [u],v\rangle=\int_M \text{curl}^{-1}w\cdot v\;d\mu\,,
$$
where $w\in\mathfrak X(M)$, $v\in\mathfrak X(M)$ and $[u]=\sigma(w)\in\mathfrak X^*(M)$.

Now the action of the coadjoint operator $\widetilde{\rm ad}^*_{(V,A)}$  on the pair of fields $(\omega,B)$
can be described as follows: for any pair of Lie algebra elements $(V,A), (W, C)\in \mathfrak g=\mathfrak X(M)\times\mathfrak X(M)$, we have

\begin{equation}
\begin{array}{rcl}
&~&\langle (W,  C),\widetilde{\rm ad}^*_{(V,A)}(\omega,B)\rangle
=\langle (W,  C),{\rm ad}^*_{(V,\sigma(A))}(\sigma(\omega),B)\rangle\\\\
&=&\langle (W,C), (L_{V}\sigma(\omega)-L_B \sigma(A),-[V,B])\rangle=\langle (W,C), (\sigma(L_{V}\omega-L_B A),-[V,B])\rangle\\\\
&=&\langle (W,C), (\sigma([\omega,V]-[A,B]),-[V,B])\rangle=\langle (W,C),([\omega,V]-[A,B],-[V,B])\rangle\,.
\end{array}
\end{equation}

The equality in the second line is due to the fact that the operator $\sigma$ commutes with the volume-preserving change of coordinates. In these computations we have used $\sigma$ to identify $\mathfrak X(M)$ and $\mathfrak X^*(M)$.
%so that the natural pairing $\langle\cdot,\cdot\rangle$ between $\mathfrak X(M)$ and $\mathfrak X^*(M)$ becomes
%$\langle w,v\rangle=\langle [u],v\rangle=\int_M \text{curl}^{-1}w\cdot v\;d\mu,$
%where $w\in\mathfrak X(M)$, $v\in\mathfrak X(M)$ and $[u]=\sigma(w)\in\mathfrak X^*(M)$. We then have
%$$
%\langle A,[B,C]\rangle=\int_M A\cdot\text{curl}^{-1}[B,C]\,d\mu=\int_M A\cdot(C\times B)\,d\mu\,.
%$$

We conclude that the coadjoint action $\widetilde{\rm ad}^*_{(V,A)}$ on the vector fields  $(\omega,B)$ is
 $$
 \widetilde{\rm ad}^*_{(V,A)}(\omega,B)=([\omega,V]-[A,B],-[V,B])\,.
 $$
\end{remark}

\begin{remark}
For a general Riemannian closed three-manifold $M$, the space of divergence-free fields is the direct sum of the space of exact 
fields and the space of harmonic fields (whose dimension is equal to the first Betti number of the manifold). 
As explained in Section~\ref{sec:settings}, the magnetic helicity and the cross-helicity are defined on pairs $(\omega,B)$ 
of exact fields. Denoting the space of divergence-free exact fields on $M$ by $\mathfrak X(M)$, all the discussion in this section 
can be applied in that context with minor variations. For example, the dual space $\mathfrak X^*(M)$ 
of the Lie algebra $\mathfrak X(M)$ of exact fields is given by the space of coexact 
1-forms identified with $\Omega^1/\text{ker}(d:\Omega^1\rightarrow\Omega^2)$. In this case, the natural pairing and the operator $\sigma$ 
defined in Remark~\ref{rem:action} are well defined on exact fields because $\text{curl}^{-1}:\mathfrak X(M)\to \mathfrak X(M)$ 
is one-to-one.
\end{remark}

\end{document}